\newcommand{\abs}[1]{\left\vert#1\right\vert}
\newcommand{\set}[1]{\left\{#1\right\}}
\newcommand{\ignore}[1]{}
\DeclareMathOperator{\mathalg}{ALG}
\DeclareMathOperator{\OPT}{OPT}
\DeclareMathOperator{\PO}{PO}
\DeclareMathOperator{\NPO}{NPO}
\newtheorem{theorem}{Theorem}
\newtheorem{lemma}[theorem]{Lemma}
\newtheorem{corollary}[theorem]{Corollary}
\newcommand{\POValue}{$MEP$}
\newcommand{\POWork}{$SRPT$}
\newcommand{\POLen}{$LP$}
\newcommand{\fOPT}{$fOPT$}
\definecolor{lightblue}{RGB}{200,200,255}
\definecolor{lightgreen}{RGB}{200,255,200}
\definecolor{darkgreen}{RGB}{100,155,100}
\def\mywid{8cm}
\def\myhei{5.5cm}
\def\legendsize{\footnotesize}
\newcommand{\simullegend}[0]{
\begin{tikzpicture}[scale=0.75]
\node[text width=2cm] (opt) at (1,1) {{\legendsize \ref{gr:FOPTUW} \fOPT}};
\node[text width=2cm] (opt) at (4,1) {{\legendsize \ref{gr:POValue} \POValue}};
\node[text width=2cm] (opt) at (7,1) {{\legendsize \ref{gr:POWork} \POWork}};
\node[text width=2cm] (opt) at (10,1) {{\legendsize \ref{gr:POLen} \POLen}};
\end{tikzpicture}
}
\newcommand{\variablelambda}[5]{ 
\begin{tikzpicture}[scale=0.68]
\begin{axis}[xlabel={$\lambda_{\mathrm{on}}$, $k=10$, $L=#1$, $B=50$},ylabel={#5},ymin=#3,ymax=#3,height=\myhei,width=\mywid, scaled x ticks=true,
x tick label style={font=\footnotesize}, y tick label style={font=\footnotesize} ]
\input{graphs/lambda.k10.B50.L#1$\lambda_{\mathrm{on}}$, $k=10$, $L=#1$, $B=50$.tex}
\end{axis}
\end{tikzpicture}
 }
\newcommand{\variablek}[5]{ 
\begin{tikzpicture}[scale=0.68]
\begin{axis}[xlabel={$k$, $B=50$, $\lambda_{\mathrm{on}}=0.05$, $L=#1$},ylabel={#5},ymin=#3,ymax=#3,height=\myhei,width=\mywid, scaled x ticks=true,
x tick label style={font=\footnotesize}, y tick label style={font=\footnotesize} ]
\input{graphs/k.B50.l005.L#1$k$, $B=50$, $\lambda_{\mathrm{on}}=0.05$, $L=#1$.tex}
\end{axis}
\end{tikzpicture}
 }
\newcommand{\variableb}[5]{ 
\begin{tikzpicture}[scale=0.68]
\begin{axis}[xlabel={$B$, $k=10$, $\lambda_{\mathrm{on}}=0.05$, $L=#1$},ylabel={#5},ymin=#3,ymax=#3,height=\myhei,width=\mywid, scaled x ticks=true,
x tick label style={font=\footnotesize}, y tick label style={font=\footnotesize} ]
\input{graphs/B.k10.l005.L#1$B$, $k=10$, $\lambda_{\mathrm{on}}=0.05$, $L=#1$.tex}
\end{axis}
\end{tikzpicture}
 }
\def\alittle{0cm}
\begin{document}

\title{
Balancing Work and Size with Bounded Buffers
}


\author{Kirill Kogan and Alejandro L\'opez-Ortiz\\
   School of Computer Science \\
   University of Waterloo\\
   \{kkogan,alopez-o\}@uwaterloo.ca \\
   \and Sergey I. Nikolenko \\
   National Research University Higher School of Economics, \\
   Steklov Mathematical Institute, St.~Petersburg, Russia \\
   sergey@logic.pdmi.ras.ru \\
   \and Gabriel Scalosub  and Michael Segal\thanks{The work by Michael Segal has been partly supported by France Telecom, General Motors, European project FLAVIA and Israeli Ministry of Industry, Trade and Labor (consortium CORNET).}\\
   Dept. of Comm. Syst. Eng.\\
   Ben-Gurion University of the Negev\\
   \{sgabriel,segal\}@cse.bgu.ac.il
 }

\maketitle

%

%


\begin{abstract}
We consider a fundamental problem of managing a bounded size queue
buffer where traffic consists of packets of varying size, each
packet requires several rounds of processing before	it can be
transmitted out of the queue, and the goal
is to maximize the throughput, i.e., total size of successfully transmitted packets.
Our work addresses the tension between two conflicting algorithmic approaches:
favouring packets with fewer processing requirements and preferring packets of larger size.
We present a novel model for studying such systems and study the performance of online algorithms that aim to maximize throughput.
\end{abstract}

\section{Introduction}
\label{sec_introduction}

Over the recent years, there has been a growing interest in understanding the effects that buffer sizing has on network performance. The main motivation for these studies is to understand the interplay between buffer size, throughput, and queueing delay.
Broadly speaking, one can identify three main types of delay that contribute to packet latency: transmission and propagation delay, processing delay, and queueing delay.
Recent research that advocates the usage of small buffers in core routers, aiming to reduce queueing delay in the presence of (mostly) TCP traffic, sidesteps the issue that as buffers get smaller, the effect of processing delay becomes much more pronounced~\cite{ramaswamy09analysis}.
The importance of this phenomena is further emphasized by the increasingly processing-heterogeneity of network traffic.
The modern network edge is required to perform tasks with ever-increasing complexity including features such as advanced VPNs services, deep packet inspection, firewall, intrusion detection etc. Each of these features may require a different processing effort at the routers~\cite{wolf00commbench}. Application of such features directly affects processing delay. As a result, the processing order of packets and the way how these packets are processed (``run-for-completion'', processing with preemptions, or some other order) may have a significant impact on the queueing delay and throughput; increasing the required processing per packet in some of the flows may cause increased congestion even for traffic with relatively modest burstiness characteristics.
We should note that in the general case, processing requirements are independent of packet lengths, thus decoupling the amount of work required for a router to process a packet from the throughput gained upon its successful transmission.
Furthermore, it is common for required processing characteristics on a network processor to be highly regular and predictable for a fixed configuration of network elements~\cite{WP+03}. This implies that the per-packet processing requirements are expected to be available and well-defined as a function of the features associated with the flow and the network element configuration.
Moreover, in reactive mode of configuration in Software-Defined Networks (like OpenFlow \cite{OF}), required processing can be estimated by the controller.

This situation leads to several questions that are relevant to the design and implementation of router architectures.
For instance, in light of heterogeneous processing requirements in the traffic, does one need to implement input buffering before a packet is handled by the network processor? If so, what should the size of such a buffer be, and what admission control policy should be applied?
Another question is related to adapting common active queue management (AQM) policies so that they account for heterogeneous processing required by traffic. In this respect, the main question is whether current AQM approaches are capable of considering these characteristics; if not, what form should new policies take?
In this work, we initiate the study of these questions and the tradeoffs they encompass. We focus on improving our understanding of effects that processing disciplines have on throughput in cases of bounded buffers where traffic is heterogeneous in terms of both packet processing requirements and packet length.

In what follows, we adopt the terminology used to describe queue
management within a router in a packet-switched network. We focus
our attention on a general model for the problem where we are
required to manage the admission control and scheduling units in a
single bounded size queue, where arriving traffic consists of {\em
packets}, such that each packet is labeled with its {\em size} (e.g., in
bytes), and {\em processing requirement} (in processor cycles). A
packet is successfully {\em transmitted} once the scheduling unit
has scheduled the packet for processing for at least its required
number of cycles, while the packet resides in the buffer. If a
packet is dropped from the buffer, either upon arrival due to
admission control policies or after being admitted and possibly
partially, but not fully, processed (in scenarios where
push-out is allowed), such a packet is irrevocably lost. We focus
our attention on maximizing the {\em throughput} of the queue,
measured by the total number of bytes of packets that are
successfully transmitted by the queue.

\section {Our Contributions}
In this work we provide a formal model for studying problems of online buffer management and online scheduling in settings where packets have both varying size and heterogeneous processing requirements, and one has a limited size buffer to store arriving packets. Our model lets us study the interplay between potentially conflicting approaches, favoring large packets and favoring packets with less required processing,
in the situation where the goal is to maximize the total throughput. The offline version of this problem is NP-hard, as it encompasses Knapsack as a special case. For the more natural online setting, we provide algorithms with provable performance guarantees as well as lower bounds on the performance of such algorithms.
We focus our attention on priority-based buffer management and scheduling in both {\em push-out} (PO) settings, where admitted packets are allowed to be pushed out of the queue prior to having its processing completed (in which case the packet does not contribute to the system's throughput), and in the {\em non-push-out} (NPO) case, where buffer management decisions are limited to admission control.
Specifically, we consider the following priority queueing regimes:
\begin{inparaenum}[(i)]
\item Shortest-Remaining-Processing-Time (SRPT) first, common in job scheduling environments;
\item Longest-Packet (LP) first;
\item Most-Effective-Packet (MEP) first, which prioritizes packets by the ratio of their residual processing requirement to size.
\end{inparaenum}
We study buffer management algorithms for these priorities, where our bounds are in terms of 
\begin{inparaenum}[(i)]
\item the maximum size of a packet and
\item the maximum number of processing cycles required by a packet.
\end{inparaenum}
A summary of our results appears in Table~\ref{tbl}.

\begin{table*}[t]\centering
\small
\begin{tabular}{|l|c|c|c|c|}\hline
Bound & NPO, any priority & PO, SRPT priority & PO, LP priority & PO, MEP priority \\\hline
Lower & $kL$ & $L$ & $k$ & $1+\left(L\ln k - k - L\right) / B$ \\
Upper & $k(L+1)$ & $2L$ & $(k+3)$  & $-$ \\\hline
\end{tabular}

\caption{Results summary: lower and upper bounds on the competitive ratio.}\label{tbl}
\end{table*}

\section{Related Work}\label{related_work}

In recent years there has been a surge in the study of the effect of buffer size
on the traffic queueing delay arising in the system. Appenzeller et al.~\cite{AKM+04} studied
this problem in the context of statistical multiplexing, focusing mostly on TCP flows. More recently, broader aspects of these question were studied. A comprehensive overview of perspectives on router buffer sizing can be found in~\cite{vishwanath09perspectives}.
Keslassy et al.~\cite{KKSS+11} were the first to consider buffer management and scheduling in the context of network processors, where arriving traffic has heterogeneous processing requirements. They study both FIFO with recycles and SRPT priority schedulers in both push-out and non-push-out buffer management regimes. They focused on the case where packets are of unit size and showed competitive algorithms,
as well as lower bounds, for such settings. They further introduced
the notion of push-out costs which serves to balance the
aggressiveness demonstrated by the the buffer management unit. We
believe the assumption made in~\cite{KKSS+11} that packets are of
unit size is rather restrictive, since in real life NPs have to deal
with packets of varying size, and it is unclear how one should
design algorithms that ensure good throughput guarantees in such
highly heterogeneous scenarios.
The work of Keslassy et al. as well as our current work, can be viewed as part of a larger research effort that focuses on
studying competitive algorithms for buffer management and
scheduling, and specifically the study of such algorithms in
bounded-buffers settings (see, e.g., a recent survey by
Goldwasser~\cite{G+10} which provides an excellent overview of this
field).

\begin{figure}
\begin{center}
\includegraphics{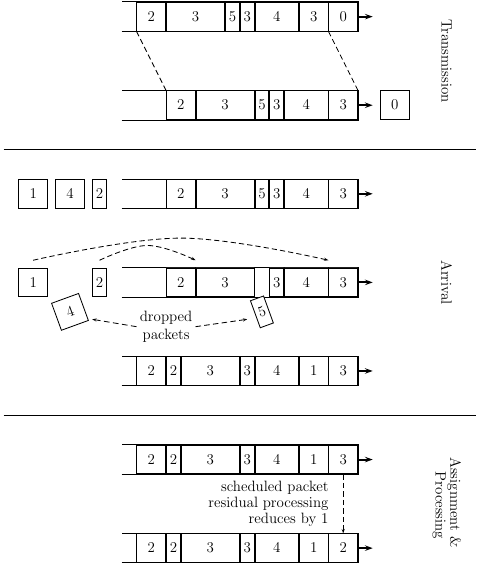}
\end{center}
\caption{An outline of the model. The top subfigure shows the transmission phase, the middle subfigure shows the arrival phase where packets might be discarded, and the bottom subfigure shows the assignment and processing phase. The length of a packet represents its size, and the number stamped on the packet represents the number of its (residual) required processing cycles.}
\label{fig:model}
\end{figure}

The SRPT algorithm has been studied extensively in OS scheduling for multithreaded processors,
and it is well known to be optimal for mean response~\cite{scharge68proof}.
Additional objectives, models, and algorithms have been studied extensively in this context; see,
e.g.,~\cite{LR+97,MRSG+05,MPT+94}.
A comprehensive overview of competitive online scheduling for server
systems can be found in~\cite{K+07}; however, OS scheduling is mostly concerned with average response
time and average slowdown, while we focus on providing
worst-case guarantees on the throughput. Furthermore, OS scheduling does not allow for dropping jobs, which is an inherent aspect of our model, as implied by the fact we have a limited-size buffer, and overflowing packets must be dropped.
The model considered in our work is also closely related to Job-shop scheduling problems~\cite{brucker06jobshop}, most notably to hybrid flow-shop scheduling~\cite{ruiz10hybrid}, in scenarios where machines have bounded buffers. However, while these works focus on system delay, our main focus is system throughput.

\section{Model Description and Algorithmic Framework}

Consider a buffer with bounded capacity of $B$ bytes handling the arrival of a sequence of packets. Each arriving packet $p$ has
a size $\ell(p) \in \set{1,\ldots,L}$ (in bytes) and a number of required processing cycles $r(p) \in \set{1,\ldots,k}$; both
$\ell(p)$ and $r(p)$ are known for every arriving $p$.\footnote{For a motivation why this information may be available, see~\cite{WP+03}.
Our assumption that the size may be as small as one byte is made for simplicity and can be viewed as a scaling assumption.} The
values of maximal required processing $k$ and maximal size $L$ will play a fundamental role in our analysis;
however, that none of our algorithms need to know $k$ in advance. The queue performs two main tasks: {\em buffer
management}, i.e., admission control of new packets and push-out of currently stored packets, and {\em
scheduling}, i.e., which of the currently stored packets are scheduled for processing. The scheduler will be determined
by the {\em priority policy} employed by the queue. We assume a multi-core environment with $C$ processors, so that
at most $C$ packets may be assigned for processing in any given time. Below we assume $C=1$;
this setting suffices to show both the intrinsic difficulties of the model and our algorithmic
scheme. We assume slotted time, where each time slot $t$ consists of 3 phases:
\begin{inparaenum}[(i)]
\item {\em transmission}, when packets with zero remaining required processing leave the queue;
\item {\em arrival}, when new packets arrive, and the buffer management unit performs both admission-control and possibly push-out;
\item {\em assignment and processing}, when a single packet is assigned for processing by the scheduling unit.
\end{inparaenum}
Figure~\ref{fig:model} depicts our general model.
If a packet is dropped prior to being {\em transmitted} (i.e., while it still has a positive number of required processing cycles), it is lost;
we can drop a packet either upon arrival or due to a push-out decision while it is in the buffer. A packet contributes its size to the
objective function only upon being successfully transmitted. The goal of a buffer management algorithm is to maximize the overall throughput,
 i.e., total number of bytes transmitted.

We define a {\em greedy} buffer management policy as a
policy that accepts all arrivals whenever there is available
buffer space in the queue. We only consider {\em work-conserving} schedulers, i.e. schedulers that never leave the processor idle unnecessarily.
An arriving packet $p$ {\em pushes out} a packet $q$ that has already been accepted into the buffer
iff $q$ is dropped in order to free up buffer space for $p$ and $p$ is admitted to the buffer instead. A buffer management policy is called
{\em push-out} if it allows packets to push out currently stored packets.
For an algorithm $\mathalg$ and a time slot $t$, we define
$IB^{\mathalg}_t$ as the set of packets stored in $\mathalg$'s buffer  at time $t$.
The number of {\em processing cycles} of a packet is key to
our algorithms. Formally, for every time $t$, and
every packet $p$ currently stored in the queue, its number of {\em residual
processing cycles}, denoted $r_t(p)$, is defined to be the number of
processing cycles it requires before it can be successfully
transmitted.

Push-out and non-push-out policies will be denoted by $\PO$ and $\NPO$ respectively.
We will focus our attention on {\em priority-queueing} disciplines, which determine both the scheduling and the buffer management behaviour of the queue. Specifically, we will focus our attention on the following priorities, which differ by the parameter determining the priority:
\begin{inparaenum}[(i)]
\item {\em processing}, in which the packet with the least amount of residual cycles has the top priority (referred to as SRPT);
\item {\em length}, in which the largest packet receives the top priority (referred to as LP).
\item {\em processing-to-length}, in which the packet with the least residual cycles to size ratio has the top priority (referred to as MEP).
\end{inparaenum}

\begin{algorithm}{}
\normalsize
\begin{algorithmic}[1]
   \If {there is space available in the queue}
     \State accept $p$
   \EndIf
\end{algorithmic}
\caption{{\sc $\NPO$}($p$): Buffer Management Policy}
\label{alg:npo}
\end{algorithm}
\begin{algorithm}{}
\normalsize
\begin{algorithmic}[1]
   \State accept $p$
   \While {the last packet $q$ in the buffer starts above  position $B-2L+1$}
     \State drop $q$
   \EndWhile
\end{algorithmic}
\caption{{\sc $\PO$}($p$): Buffer Management Policy}
\label{alg:pq-1}
\end{algorithm}

We use competitive analysis~\cite{ST+85,Borodin-ElYaniv} to evaluate performance guarantees provided by our online algorithms. An
algorithm ALG is said to be {\em $\alpha$-competitive} (for some $\alpha \geq 1$) if for any arrival sequence $\sigma$, the total length of packets successfully transmitted by ALG is at least $1/\alpha$ times the total length of packets successfully delivered by an optimal solution (denoted $\OPT$), obtained by an offline clairvoyant algorithm.

Next we define the algorithms used below for all types of characteristics.
The Non-Push-Out Algorithm (NPO) is a simple greedy work-conserving policy that accepts a packet if there is buffer space available.
In the push-out case, the $\PO$ algorithm is defined in Algorithm~\ref{alg:pq-1}. Note that $\PO$ is somewhat conservative in its use of the buffer; the reason for this will be clear from our results presented in Sections~\ref{sec:pq-residual-passes} and~\ref{sec:pq-residual-passes-byte}.
We will sometimes use the term {\em value} to denote the total length of a set of packets, and our analysis will be based on
comparing the mapping value obtained by an optimal solution to that of our algorithm. Specifically, we will make use of
mappings between packets transmitted by $\OPT$ and by our algorithm such that their respective values differ only by a
multiplicative factor; this factor will serve as a bound on the competitive ratio.

\section{Useful Properties of Ordered Multisets}

To facilitate our proofs, we will make use of properties of ordered (multi)sets. These properties enable us to compare the performance of our proposed algorithms with the optimal policy for various priority disciplines.
In what follows, we consider multisets of real numbers, where we assume that each multiset is ordered in non-decreasing order. We will refer to such multisets as {\em ordered sets}. For every $1 \leq i \leq \abs{A}$, we will further refer to element $a_i \in A$ or $A[i]$ as the $i$-th element in the set $A$ in the above-mentioned order.
Given two ordered sets $A,B$, we say $A \geq B$ if $a_i \geq b_i$ for every $i$ such that both $a_i$ and $b_i$ exist.
The following lemma and its corollary will be needed as fundamental tools throughout our analysis.

\begin{lemma}
\label{l:add_pair}
For any two ordered sets $A,B$ satisfying $A \geq B$ and any two real numbers $a,b$ such that $a \geq b$, if \begin{inparaenum}[(i)]
\item \label{>=} $b\leq b_{|B|}$ or \item \label{<=} $|A|\leq |B|$ \end{inparaenum} then ordered sets $A'=A\cup\set{a}$, $B'=B\cup\set{b}$ satisfy $A' \geq B'$.
\end{lemma}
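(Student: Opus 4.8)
The plan is to prove the claim by cases according to which of the two hypotheses, (\ref{>=}) or (\ref{<=}), holds, and within each case to verify the defining inequality $A'[i] \geq B'[i]$ for every index $i$ at which both sides are defined. The key observation is that inserting a new element into an ordered set shifts some suffix of the set up by one position: if $a$ is inserted so that it occupies position $j$ in $A'$ (i.e. $A'[j]=a$), then $A'[i]=A[i]$ for $i<j$ and $A'[i]=A[i-1]$ for $i>j$, and similarly for $B'$ with $b$ landing in some position $j'$. So the whole argument reduces to comparing the insertion positions $j$ and $j'$ and checking a small number of local inequalities.

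First I would handle the easy sub-ranges that need no hypothesis at all. For $i < \min(j,j')$ we have $A'[i]=A[i]\geq B[i]=B'[i]$ directly from $A\geq B$. For $i > \max(j,j')$ we have $A'[i]=A[i-1]\geq B[i-1]=B'[i-1]=B'[i]$, again from $A\geq B$ (here $i-1$ is a legal index of both $A$ and $B$ since $i>1$ and $i\le\min(|A'|,|B'|)$ forces $i-1\le\min(|A|,|B|)$ after accounting for the shift — this bookkeeping is the fiddly part). At the position(s) where $a$ or $b$ is inserted, one uses $a\geq b$ together with $a\geq$ (the element of $A$ that $b$'s position in $B'$ corresponds to): concretely, if $j'\le j$ then for $i$ in the range $[j', j]$ we have $B'[i]=B[i-1]$ while $A'[i]$ is either $a$ or $A[i]$; since $A[i]\ge B[i]\ge B[i-1]$ and $a\ge b = B'[j']\le B[j']\le\cdots$, monotonicity of $B$ closes it. The symmetric situation $j<j'$ is where the hypotheses actually bite.

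The main obstacle, and the only place the case split matters, is bounding how far to the \emph{left} $b$ can be inserted relative to $a$ — i.e. ruling out or controlling the situation $j < j'$, where $B'$ has its ``long run'' of shifted elements starting earlier than $A'$'s. Here is where hypothesis (\ref{>=}), $b\le b_{|B|}$, helps: it forces $b$ to be inserted at the very end of $B$, so $j'=|B|+1=|B'|$, which means there is \emph{no} index $i$ with $j<i<j'$ that is simultaneously $\le|A'|$ and $\le|B'|$ except possibly $i=|B'|$ itself, and at $i=|B'|$ one checks $A'[|B'|]\ge b$ separately (if $|A'|\ge|B'|$, then $A'[|B'|]$ is either $a\ge b$ or some $A[m]\ge B[m]\ge b_{|B|}\ge b$). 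Under hypothesis (\ref{<=}), $|A|\le|B|$, the danger index range $j<i<j'$ may be nonempty, but then $i-1<j'\le|B'|$ combined with $|A'|=|A|+1\le|B|+1=|B'|$ and the shift structure shows $A'[i]$ is drawn from $A$ at an index $\le|A|$ while $B'[i]=B[i-1]$; since $|A|\le|B|$ guarantees $A[i-1]$ is also defined when $B[i-1]$ is and $i-1<j$ so $A'$ is not yet shifted, $A'[i]=A[i]\ge B[i]\ge B[i-1]=B'[i]$. I would write this out carefully for the two cases $j'\le j$ and $j<j'$, note the degenerate boundary indices, and conclude $A'\ge B'$. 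The whole proof is elementary; the care needed is purely in the index arithmetic around the insertion points, and I expect roughly half a page once the shift notation is fixed.
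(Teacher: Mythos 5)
Your overall skeleton (case analysis on the two insertion positions, then an index-by-index check) is the same as the paper's, and the ranges $i<\min(j,j')$ and $i>\max(j,j')$ are handled correctly. But the heart of the argument is inverted. You claim the case $j<j'$ ($a$ inserted strictly before $b$) is where the hypotheses bite; in fact that case needs no hypothesis at all: for $j<i<j'$ one has $A'[i]=A[i-1]\geq A'[j]=a\geq b\geq B[j'-1]\geq B[i]=B'[i]$, using only $a\geq b$ and orderedness. (Note also that your shift formulas in this range are reversed: for $j<i<j'$ it is $A'$ that is already shifted, $A'[i]=A[i-1]$, while $B'$ is not, $B'[i]=B[i]$ — you wrote the opposite, so your verification of this range does not go through as written, though the range is in fact harmless.) The case that genuinely requires hypothesis (i) or (ii) is $j'<j$, at the single index $i=j'$, where $A'[j']=A[j']$ and $B'[j']=b$, so one needs $A[j']\geq b$. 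You dispose of this in your ``easy'' branch via ``$b=B'[j']\leq B[j']$'', but that inequality is precisely what can fail: it requires $j'\leq|B|$, i.e., that $b$ is not appended past the end of $B$. Hypothesis (i) gives $j'\leq|B|$ directly; hypothesis (ii) gives it because $j'=|B|+1$ together with $j>j'$ would force $j>|B|+1\geq|A|+1=|A'|$, which is impossible. Without either hypothesis the lemma is simply false: take $A=\set{1,1}$, $B=\set{0}$, $a=10$, $b=2$; then $A'=\set{1,1,10}$, $B'=\set{0,2}$, and $A'[2]=1<2=B'[2]$.

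Relatedly, your reading of hypothesis (i) is backwards: $b\leq b_{|B|}$ forces $b$ to be inserted \emph{within} $B$ (at some position $j'\leq|B|$), not ``at the very end'' with $j'=|B|+1$. So while the decomposition is the right one, the step that carries the logical weight — identifying where the side conditions enter and why — is misplaced, and the proof as proposed would not survive being written out in full.
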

\begin{proof}

\begin{figure}[t!]
\centerline{
\hfill
\subfigure[Case $i \leq j$]{
\includegraphics[scale=0.8]{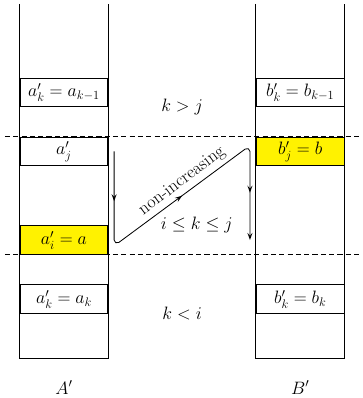}
}
\hfill
\subfigure[Case  $i > j$]{
\includegraphics[scale=0.8]{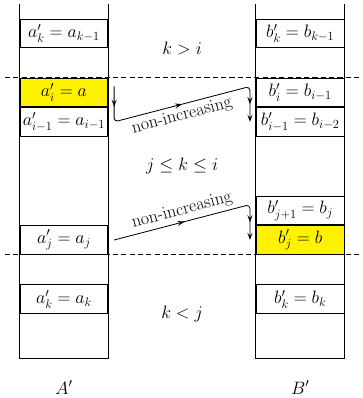}
}
\hfill
}
\caption{Cases of Lemma~\ref{l:add_pair}.} \label{fig:ordered-set}
\end{figure}

We will refer to elements in $A'$ and $B'$ as $a'$ and $b'$, respectively.
Assume $i$ and $j$ are the positions of $a \in A'$ and $b \in B'$, respectively, i.e., $a'_i=a$ and $b'_j=b$. We need to show that for every $k$ for which both $a'_k$ and $b'_k$ exist, $a'_k \geq b'_k$.
We distinguish two cases:
\begin{enumerate}
\item [(a)] $i \leq j$ (see Figure~\ref{fig:ordered-set}(a)):
for all $k < i$, $a'_k=a_k$, and $b'_k=b_k$, hence by the assumption that $A \geq B$, $a'_k \geq b'_k$.
By the assumption that $a \geq b$, and the fact $A'$ and $B'$ are ordered, for every $p \geq i$ and $q < j$ we have $a'_p \geq a'_i \geq b'_j \geq b'_q$.
In particular, for every $i \leq k < j$ we have $a'_k \geq b'_k$ (by taking $p=q=k$).
For $k=j$, since $A'$ and $B'$ are ordered, and since in the current case $i \leq j$, we have $a'_j \geq a'_i = a \geq b = b'_j$.
For $k > j$ we have $a'_k=a_{k-1} \geq b_{k-1} = b'_k$, where the inequality follows from the assumption that $A \geq B$.
\item [(b)] $i > j$ (see Figure~\ref{fig:ordered-set}(b)):
for all $k < j$, $a'_k=a_k$, and $b'_k=b_k$, hence by the assumption that $A \geq B$, $a'_k \geq b'_k$.
For $k=j$, $b'_k \leq b'_{k+1} = b_k \leq a_k = a'_k$, which follows from the fact that $b$ is inserted in slot $j=k$, $B'$ is ordered, the assumption that $A \geq B$ and  $b\leq b_{|B|}$ or $|A|\leq |B|$.
For $j < k < i$, $b'_k = b_{k-1} \leq a_{k-1} \leq a_k = a'_k$, which follows from the assumption that $A \geq B$.
For $k = i$, $a=a'_i\geq a_{i-1}\geq b_{i-1}=b'_i$.
For $k > i$, $a'_k = a_{k-1} \geq b_{k-1} = b'_k$.
\end{enumerate}
We are therefore guaranteed to have $A' \geq B'$, as required.
\end{proof}

\begin{corollary}
\label{c:add_single}
For any two ordered sets $A,B$ satisfying $A \geq B$, and any real number $b$, if \begin{inparaenum}[(i)]
\item \label{>=2} $b\leq b_{|B|}$ or \item \label{<=2} $|A|\leq |B|$ \end{inparaenum} then
the ordered set $B'=B\cup\set{b}$ satisfies $A \geq B'$.
\end{corollary}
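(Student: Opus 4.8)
The plan is to deduce the corollary directly from Lemma~\ref{l:add_pair} by artificially adding a harmless element to $A$. Specifically, I would pick a real number $a$ that is at least as large as $b$ and at least as large as every element of $A$, say $a := \max\{b,a_{|A|}\}$ (omitting $a_{|A|}$ from the maximum when $A=\emptyset$). The point of this choice is twofold: first, $a\ge b$, so the numerical hypothesis of Lemma~\ref{l:add_pair} is met; second, $a$ is a maximal element, so when we form the ordered set $A':=A\cup\set{a}$ it appears in the last position, and hence $A'[i]=a_i$ for every $1\le i\le |A|$, while $A'[|A|+1]=a$.

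Next I would invoke Lemma~\ref{l:add_pair} on the sets $A,B$ with the numbers $a,b$. Observe that the two alternative hypotheses of the corollary, namely $b\le b_{|B|}$ (condition (i)) and $|A|\le |B|$ (condition (ii)), are word-for-word the two alternative hypotheses (i) and (ii) of Lemma~\ref{l:add_pair}; neither of them mentions $a$. Therefore the lemma applies and gives $A'\ge B'$, where $B':=B\cup\set{b}$, which is exactly the set appearing in the corollary's conclusion.

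It then remains to pass from $A'\ge B'$ to $A\ge B'$, and this is pure index bookkeeping. By definition, $A\ge B'$ means $a_i\ge B'[i]$ for every index $i$ with $i\le |A|$ and $i\le |B'|$. For any such $i$ we have $i\le |A|\le |A'|$, so $A'[i]=a_i$ by the choice of $a$; and since also $i\le |B'|$, the relation $A'\ge B'$ yields $a_i=A'[i]\ge B'[i]$. The only position of $A'$ at which the extra element $a$ sits is $|A|+1$, and that position is never used in the comparison defining $A\ge B'$, so there is nothing more to verify.

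I do not expect a genuine obstacle here; the whole difficulty (such as it is) has been pushed into Lemma~\ref{l:add_pair}, and the step that needs a little care is simply making sure that appending a maximal element to $A$ leaves its first $|A|$ coordinates untouched, so that the derived inequality $A'\ge B'$ restricts cleanly to $A\ge B'$. Degenerate cases are immediate: if $A=\emptyset$ then $A\ge B'$ holds vacuously, and if $B=\emptyset$ then hypothesis (ii) forces $|A|\le 0$, i.e.\ $A=\emptyset$, and we are again in the vacuous case.
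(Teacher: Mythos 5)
Your proof is correct and follows essentially the same route as the paper: both introduce a virtual maximal element $a$ (the paper takes $a>\max\{a_{|A|},b\}$, you take $a=\max\{b,a_{|A|}\}$, which works equally well for a multiset), apply Lemma~\ref{l:add_pair} to get $A'\geq B'$, and observe that the first $|A|$ coordinates of $A'$ coincide with $A$. No issues.
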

\begin{proof}
Assume $b$ is inserted in $B'$ in location $j$. Consider a virtual item $a$ such that $a>\max\set{a_{\abs{A}},b}$. We now consider adding both $a$ and $b$ to sets $A$ and $B$ respectively. By Lemma~\ref{l:add_pair}, it follows that the resulting sets $A',B'$ satisfy $A'\geq B'$. Notice that the first $\abs{A}$ elements of $A'$ are exactly the set $A$ (by the choice of $a$), implying that we also have $A \geq B'$.
\end{proof}

\section{Non-Push-out Policies}\label{sec:non-preemptive}
While the $\NPO$ algorithm may have different priorities that govern its admission policy (which packets to choose from a set of
simultaneously arriving packets), it cannot push already admitted packets out. As a result, the worst-case bounds are very similar for all
three priorities we consider, and we simply prove a unified lower and upper bound on the performance of $\NPO$ independent of 
admission policy.

\begin{theorem}\label{t:srpt-npo-lower-upper-bound}
$\NPO$ is at least $kL$-competitive and at most $k(L+1)$-competitive.
\end{theorem}
\begin{proof}
We begin with the lower bound. To show a lower bound, we need to present a ``hard'' sequence of arriving packets.
Consider a burst of $B$ $1$-byte packets with $k$ processing cycles arriving on the first time slot; $\NPO$ invariably accepts
them all and begins processing, while $\OPT$ is free to reject them. On the second time slot, there arrive $B$ $L$-byte packets
with $1$ processing cycle each; they are accepted by $\OPT$, and it begins processing.
After that, every $k$-th time slot there arrives a $1$-byte packet with $k$ processing cycle (to fill up $\NPO$ buffer),
and on other time slots $L$-byte packets with $1$ processing cycle arrive, filling up $\OPT$ queue. As a result,
$\OPT$ transmits $L$ bytes per time slot while $\NPO$ transmits $1$ byte per $k$ time slots, getting the bound in question
(asymptotically, since $\NPO$ is working on the very first time slot).

To prove the upper bound, note that $\NPO$ must fill up its buffer before it drops any packets.
Moreover, so long as the $\NPO$ buffer is not empty, after at most $k$ time steps $\NPO$
must transmit its HOL packet. This means that $\NPO$ is transmitting at a
rate of at least $1$ byte per $k$ time steps, while $\OPT$ can transmit at most $k$ packets of size $L$ each over $k$ time slots.
Hence, the number of transmitted bytes at time $t$ for $\NPO$ is at least
$ t/k $ (we assume $k$ divides $t$ evenly for simplicity of exposition)
while $\OPT$ transmitted at most $tL$ bytes for a competitive ratio of $kL$ so long as the NPO buffer is not empty.

If $\NPO$ empties its buffer first, this means that the $\NPO$ buffer was congested at some point, so $\NPO$
has transmitted at least $B$ bytes, and $\OPT$ can transmit at most $B$ more bytes before more packets arrive to
the $\NPO$ queue. The overall ratio is therefore at most $\frac{tL+B}{t/k}$ under the condition that
$t \ge B$, yielding the bound.
\end{proof}

Thus, the simplicity of
non-push-out greedy policies does have its price. In the
following sections we explore the benefits and analyze performance of push-out policies.

\section{Buffer Management with SRPT-based Priorities}
\label{sec:pq-residual-passes}

%

In this section we address the  buffer management problem of when the queueing discipline gives higher
priority to packets with fewer required processing cycles.
We show first a lower and then an upper bound for the $\PO$ Algorithm~\ref{alg:pq-1}
with SRPT priorities. In this and subsequent sections we focus our attention on the push-out case since
non-push-out results have already been shown in Section~\ref{sec:non-preemptive} for all considered priorities.

\subsection{Lower bound}\label{sec:pq1}

\begin{theorem}
\label{t:srpt-po-lower-bound}
For $B > 2L$, $\PO$ is at least $L$-competitive for SRPT-based priorities, .
\end{theorem}
\begin{proof}[Proof of Theorem~\ref{t:srpt-po-lower-bound}]
Assume that $B/L$ is an integer. All packets received will have a
single residual pass. Consider the following sequence of arrivals.
At the beginning $B-2L+1$ 1-byte packets arrive. $\PO$ accepts
all of them. $\OPT$ drops all of them. Later on during the same time slot
$B/L$ packets of length $L$ arrive, each with a single residual
pass. $\PO$ drops all of them since their value is no better than
the value of packets in its buffer, but $\OPT$ accepts all of them and
thus $\OPT$ buffer is full. During each following time slot one 1-byte packet arrives,
each requiring a single processing cycle, followed by one packet of size $L$ bytes, requiring a single processing cycle.
$\PO$ accepts all 1-byte packets
but it does not accept any of the $L$-bytes packets. Thus, for each time
slot when there are arrivals, $\OPT$ transmits a packet of size $L$, and at the same time $\PO$ transmits a 1-byte packet.
At the end, $\OPT$ transmits $B$ bytes while $\PO$ transmits $B-2L+1$ additional bytes. Therefore,
$B+nL$ and $B-2L+1+n$ bytes are transmitted by $\OPT$ and $\PO$, respectively, where $n$ is a number of time slots with non-empty arrivals. We obtain that for $n>>B$, $\PO$ cannot have a competitive ratio better than $L$.
\end{proof}

\subsection{$\PO$ Upper Bound for $B>2L$}

Next, we show one of our main results, an upper bound for $\PO$ with SRPT priorities.

\begin{theorem}
\label{t:srpt-po-upper-bound}
For $B > 2L$, $\PO$ is at most $4L-2$-competitive for SRPT-based priorities.
\end{theorem}

In what follows we assume that $\OPT$ never pushes out packets. Such an optimal solution exists since one can consider the whole
input being available to $\OPT$ a priori. Thus, all packets accepted
by $\OPT$ are transmitted. Our analysis will be based on describing a mapping of packets in $\OPT$'s buffer to packets transmitted by $\PO$, such that every packet $q$
transmitted by $\PO$ has at most $4L-2$ bytes of $\OPT$ associated
with it. To facilitate the exposition we describe packet processing
as if packets arrive individually and sequentially one at a time,
although in reality more than one packet might arrive at a single time step $t$.
The mapping will be dynamically updated for each packet arrival and
for each packet transmission, in both $\OPT$ and $\PO$.

\paragraph{Mapping routine.}
During the transmission phase we distinguish between three cases:
\begin{itemize}[noitemsep,topsep=0pt,parsep=0pt,partopsep=0pt,listparindent=0pt,itemindent=0pt]
\item [\bf T0]  If both $\OPT$ and $\PO$ do not transmit then the mapping remains unchanged.
\item [\bf T1] If $\PO$ transmits a packet $q$ then we remove its mapped image in $\OPT$'s buffer
from future consideration in the mapping.
 The subset of these $\OPT$ packets or bytes that stay in $\OPT$ buffer at the end of transmission phase are called of type 1.
\item [\bf T2] If $\OPT$ transmits a packet $p$ but its mapped packet $q$ in $\PO$ is not transmitted then $p$ is termed a packet of type 2. (We will show next that this case never occurs).
\end{itemize}

At time $t$, denote by $M^O_t$ the ordered set of residual pass values for all non-type 1 $\OPT$ packets. All $M^O_t$ values are grouped into blocks in the following way. A block is a minimal subset of consecutive $M^O_t$ values starting from the lowest position that is not covered by any previous block, such that the overall length of the packets associated with the block is at least $L$. The minimal value in each block is called a block {\em representative}. Denote by $R_t$ an ordered set of representatives at time $t$. In addition we denote by $M^P_t$ an ordered set of processing cycles values of packets in $\PO$'s buffer at time $t$.

\begin{figure}[t!]
\centerline{
\subfigure[$(P,i)$-mapping-shift]{
\includegraphics[scale=0.8]{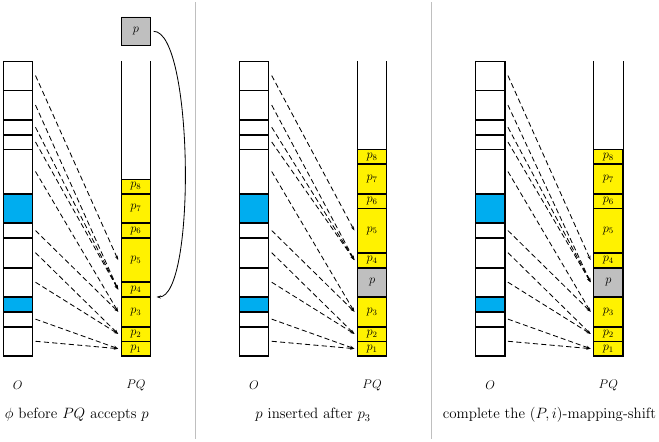}
\label{fig:mapping-shift-alg}
}}
\centerline{
\subfigure[$(O,j)$-mapping-shift ]{
\includegraphics[scale=0.8]{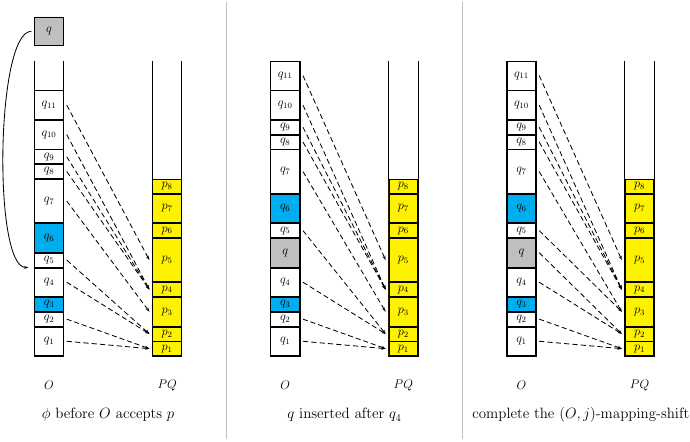}
\label{fig:mapping-shift-opt}
}}
\caption{Example of the mapping used in the proof of Lemma~\ref{l:srpt_po_faster}, and the mapping shifts performed by the analysis. The white $\OPT$ packets should be mapped and white $\PO$ packets are available for mapping. The blue $\OPT$ packets are of type 1.}
\label{fig:mapping-shift}
\end{figure}

After the arrival at time $t$ of a packet $p$ we distinguish between the following cases:

\begin{itemize}[noitemsep,topsep=0pt,parsep=0pt,partopsep=0pt]
\item [\bf A0] If $p$ is not accepted by both $\OPT$ and $\PO$, then the mapping remains unchanged.
\item [\bf A1] If after acceptance of $p$ some $\PO$ packets were dropped then clear the mappings by step A1 between these $\PO$ packets and  its mapped $\OPT$ mates. If $p$ remains in $\PO$'s buffer and $p$ is an $i$-th packet in it perform a {\em $(P,i)$-mapping-shift} (see Figure~\ref{fig:mapping-shift-alg}): for each non-empty $j$-th block $b$ and $j$-th $\PO$ packet $q$, with $j\geq i$ clear the mapping to $q$ by step A1 and map all packets of block $b$ to $q$. If $p$ is accepted by $\OPT$ to the $j$-th block, perform  an {\em $(O,j)$-mapping-shift} (see Figure~\ref{fig:mapping-shift-opt}): clear all mappings by step A1 between packets of the old $l$-th block and $l$-th $\PO$ packet (if both exist), $l\geq j$, recompute blocks starting from the $j$-th and map packets of $l$-th block to $l$-th $\PO$ packet if both exist, $l\geq j$.
\item [\bf A2] Clear all mappings assigned by step A2.  Map packets of all unmapped blocks to the HOL $\PO$ packet.
\end{itemize}

\begin{lemma}\label{lem:feas_and_blocksize}
\begin{enumerate}[label=(\arabic*)]
\item The mapping is feasible.
\item The total length of packets of the same block is at most $2L-1$.
\end{enumerate}
\end{lemma}
\begin{proof}
1. By definition $\PO$ accepts the arriving packet and all the packets
with packet start above $B-2L+1$ are dropped. Hence, if after applying step A1 of the mapping routine there are still
unmapped $\OPT$ packets then the $\PO$ buffer must contain at least one packet. Therefore, all $\OPT$ packets unmapped by step A1
are mapped by step A2.

2. In the worst case an total length of all packets in the block except the last one is $L-1$ and the last packet of the same block has length $L$, so the claim follows.
\end{proof}

\begin{lemma}\label{l:srpt_po_faster}
After the $t$-th packet arrives, if an $\OPT$ packet $p$ is mapped to a (possibly transmitted) $\PO$ packet $q$
then $r_t(p)\geq r_t(q)$. Moreover, all $\OPT$ packets are mapped, and
at most $2L-1$ bytes are mapped to each $\PO$ packet by step A1, and possibly at most $2L-1$ more bytes are mapped
to the HOL packet by step A2 at any time $t$.
\end{lemma}
\begin{proof} 
We prove the lemma by induction on the number of arrived packets.
For the base, consider the first arriving packet $p$; by definition $\PO$ always accepts it.
If $p$ is dropped by $\OPT$ then the claim trivially holds.
If $p$ is accepted by $\OPT$, it creates a new block with representative $p$. Clearly, $r_1(p)\geq r_1(p)$, all $\OPT$
packets are mapped, and at most $L$ bytes are mapped to the $\PO$
packet $p$, and the base holds.

Assume by induction that for any time $t'<t$, after the arrival of the $t'$-th packet it holds that for
any $\OPT$ packet $p$ that is mapped to a (possibly transmitted) $\PO$
packet $q$, $r_{t'}(p)\geq r_{t'}(q)$. Moreover, all $\OPT$ packets
are mapped and at most $2L-1$ $\OPT$ bytes are mapped to each $\PO$ packet  by step A1.
In addition at most $2L-1$ $\OPT$ bytes are mapped to the HOL packet in $\PO$ buffer at time $t'$ by step A2.

Clearly, if a representative of a block $p'$ mapped to a $\PO$ packet $q$ by step A1 satisfies
$r_{t'}(p')\geq r_{t'}(q)$ at time $t'$, then for any packet $p''$ of the same block $r_{t'}(p'')\geq r_{t'}(q)$.
To show that this holds after the $t$-th packet arrives, it suffices to consider the ordered set of representatives
$R_{t-1}$ and its update after this arrival and show that $R_t \geq M^P_t$. By the induction hypothesis
the remaining number of processing cycles of any $\OPT$ packet is at least the number of processing cycles of its $\PO$ counterpart,
i.e., $R_{t-1}\geq M^{P}_{t-1}$, and there are no $\OPT$ packets of type 2 formed while the first $t-1$ packets were accepted.
Denote by $R^1_{t'}$ a set of representatives of blocks mapped by step A1. Since all packets of the same block are
mapped to the same $\PO$ packet, $\abs{R^1_{t-1}}\leq\abs{M^{P}_{t-1}}$.

We denote by $t-$ the time slot just before the arrival of the $t$-th
packet. First suppose that a transmission occurs before the $t$-th packet arrives, i.e., between the
$t$-th and $(t-1)$-th packet arrivals at least one packet is transmitted by $\OPT$ or $\PO$. By the induction hypothesis, it is
impossible for $\OPT$ to transmit a packet corresponding to the first
value in $M^P_{t-1}$ before the packet whose value is the first in $M^P_{t-1}$,
and this holds for any sequence of transmissions prior to the $t$-th arrival. Therefore, if
$\PO$ transmits between the $t-1$-th and $t$-th arrival, $\abs{M^P_{t-}}$ is reduced by one. On the other hand,
$\abs{M^O_{t-}}$ is reduced by the number of packets in the first block (if any) mapped by step A1 to the packet sent by
$\PO$. Hence,  $\abs{R^1_{t-}}\leq\abs{M^{P}_{t-}}$. Moreover, any value mapped by step A2 to a packet transmitted by $\PO$
is removed from $M^O_{t-}$ upon this transmission
(by the definition of $M^O_t$ which consists of non-type 1 packets only). Thus, $R_{t-}=R^1_{t-}$ and $\abs{R_{t-}}\leq\abs{M^{P}_{t-}}$
in this case, and the claim holds at time $t-$, in particular $R_{t-}\geq M^{P}_{t-}$.

Consider now the arrival of the $t$-th packet $p$. We distinguish the following cases.
\begin{itemize}[noitemsep,topsep=0pt,parsep=0pt,partopsep=0pt,listparindent=0pt,itemindent=0pt]
\item [\bf 0]  $\OPT$ does not accept $p$ and $\PO$ accepts and immediately drops $p$. We are done.
\item [\bf 1] $\PO$ does not drop $p$, $\OPT$ does not accept $p$. In this case, $R_t=R_{t-}$ and it suffices to show that $R_{t-}\geq M^P_t$.
\begin{itemize}[noitemsep,topsep=0pt,parsep=0pt,partopsep=0pt,listparindent=0pt,itemindent=0pt]
\item [\bf 1.1] $\abs{R_{t-}}\geq \abs{M^{P}_{t-}}$. Since $\abs{R^1_{t-}}\leq\abs{M^{P}_{t-}}$, some $\OPT$ packets represented in $R_{t-}$ are mapped by step A2.
In this case, the last packet in $\PO$ buffer occupies the $(B-2L+1)$-th byte (each block has length $\geq L$, each block is mapped to a single $\PO$ packet, and all bytes in $\PO$ buffer are available for mapping). $\PO$ does not drop $p$, so the value of $r_t(p)$ is at most the last value in $M^P_t$. Since $\OPT$ does not accept $p$, by Corollary~\ref{c:add_single}(\ref{>=2}), $R_{t-}\geq M^{P}_{t-}\cup \set{r_t(p)}=M^P_t$.
\item [\bf 1.2] $\abs{R^1_{t-}}\leq \abs{M^{P}_{t-}}$. Again, since in this case $\OPT$ does not accept $p$, by Corollary~\ref{c:add_single}(\ref{<=2}), $R_{t-}\geq M^{P}_{t-}\cup \set{r_t(p)}=M^P_t$.
\end{itemize}
\item [\bf 2] $\OPT$ accepts $p$, $\PO$ drops $p$. In this case $M^P_t=M^P_{t-}$, and $r_t(p)$ is larger than any value in $M^P_{t}$. Let $l$ be the position of $p$ in $\OPT$ buffer. For any $m \geq l$ the $m$-th $\OPT$ packet has more residual cycles than any value in $M^P_t$,
so for any $\OPT$ packet $p'$ mapped to $\PO$ packet $q$ by step A1 $r_t(p')\geq r_t(q)$, and we have $R_t \geq M^P_t$.
\item [\bf 3] $\OPT$ accepts $p$, $\PO$ does not drop $p$. If $\abs{R_{t-}}\leq \abs{M^{P}_{t-}}$ or $\abs{R_{t-}}\geq \abs{M^{P}_{t-}}$, then similarly to the Cases 1.1 and 1.2, by Lemma~\ref{l:add_pair} we have that
    $R'=R_{t-}\cup\set{r_t(p)}\geq M^P_{t-}\cup \set{r_t(p)}=M^P_t.$
    Therefore, in this case it suffices to show that $R_t\geq R'$, which in turn implies $R_t\geq M^P_t$.
    Let $j$ denote the index of the block where $p$ is inserted in $\OPT$. We have to consider two possibilities for the position of $p$'s number of processing cycles in $R'$, which could be either the $j$-th or $(j+1)$-st.
\begin{itemize}[noitemsep,topsep=0pt,parsep=0pt,partopsep=0pt,listparindent=0pt,itemindent=0pt]
\item [\bf 3.1] $R'[j]=r_t(p)$. In this case $p$ now serves as the representative of block $j$, i.e., $R_{t}[j]=r_t(p)$.
If $\ell(p)=L$ then $p$ forms a full block and $R_t[m]=R^1_{t-}[m-1]$ for all $m>j$.
Therefore, $R_t\geq R'$ (actually, in this case we have strict equality). Otherwise, $\ell(p)<L$, and at least as many residual processing 
cycles as the $(l+1)$-th element will join the $j$-th block after recomputation (since such a block must add at least $L$ to total length).
We therefore have $R_t[m]\geq R^1_{t-}[m]$ for all $m>j$. Since $R'[m]=R_{t-}[m-1]$, $m>j$, this case follows.
\item [\bf 3.2] $R'[j+1]=r_t(p)$. In this case the representative of block $j$ remains unchanged, i.e., $R'[j] =R_{t}[j]=R_{t-}[j]$ and $R'[m]=R_{t-}[m-1]$, $m>j+1$.
Since $p$ belongs to the $j$-th block after acceptance and $r_t(p)$
is not a representative of the block, then $R_t[j+1]\geq r_t(p)$.
Since after recomputation representatives will move up for no more
than one block in $R_t$ compared to $R_{t-1}$, $R_t[m]\geq
R_{t-}[m-1]$, $m>j+1$.
Therefore, $R_t\geq R'$ and this case follows.
\end{itemize}
\end{itemize}

Now let us show that there are sufficiently many $\PO$ packets
to map all of $\OPT$ packets such that at most $2L-1$ bytes
are assigned by step A1 to each transmitted packet of $\PO$ and
additionally at most $2L-1$ bytes are assigned to the HOL packet of
$\PO$ by step A2. Recall that the claim holds for time $t-$.
Consider the arrival of the $t$-th packet $p$. If $\PO$ accepts
$p$, then the claim holds, since this new packet can support the
block changes (and possible addition) that may potentially occur if
$\OPT$ also accepts $p$. If $\PO$ does not accept $p$ then by the
definition of $\PO$ this can only happen if the buffer occupancy of
$\PO$ is at least $B-2L+1$. Clearly, the total length of a block
mapped by step A1 to any $\PO$ packet is at most $2L-1$ (by
definition). Furthermore, since we have shown that $R_t \geq M^P_t$,
and by definition the blocks are of total length at least $L$, it
must follow that the total length of packets in $\PO$ covers at
least this amount of total length of packets in $\OPT$ mapped to
$\PO$ by step A1. It follows that the remaining total length of
packets in the buffer of $\OPT$ that are not mapped by step A1 can be
at most $2L-1$ (the possibly unused space in $\PO$). It follows
that the total length of packets mapped to the HOL packet of
$\PO$ by step A2 is at most $2L-1$, as required.
\end{proof}

The proof of Theorem~\ref{t:srpt-po-upper-bound} now follows immediately from Lemma~\ref{l:srpt_po_faster}.

Next we generalize the previous mapping and show how to improve the upper bound of $\PO$ for sufficiently large buffers.
\begin{theorem}
\label{t:srpt-po-upper-bound-new}
$\PO$ is at most $\frac{(2L-1)(N+1)}{N}$-competitive for SRPT-based priorities, where
$N=\lceil\frac{B-2L+1}{2L-1}\rceil$.
\end{theorem}

The idea is to redistribute mapped bytes by step A2 between different $\PO$ packets.
Let $N=\lceil\frac{B-2L+1}{2L-1}\rceil$. We consider an updated version of step A2 and now to each $\PO$ packet can be mapped at most $\frac{2L-1}{N}$ value by step A2.
The mapping routine is unchanged during the transmission phase and now it operates on $M^O_t$ in the following way.
Denote by $M^O_t$ at time $t$ an ordered set of values of processing cycles of no-type 1 $\OPT$ packets that are not mapped by the step A2 as defined below. Observe that now we exclude from the future consideration by step A1 all $\OPT$ packets that are mapped by step A2 even before its $\PO$ mates are transmitted. The definition of block, representative, $R_t$ and $M^P_t$ remain unchanged.
During the arrival of a packet $p$ at time $t$, steps A0 and A1
remain unchanged. Next, we define the changed or new steps.

\begin{itemize}[noitemsep,topsep=0pt,parsep=0pt,partopsep=0pt,listparindent=0pt,itemindent=0pt]
\item [\bf A2]
If prior to $t$ there are no bytes mapped by step A2 and after the $t$-th A1 step there are still $Y$ unmapped $\OPT$ bytes then a $j$-th portion of $\frac{Y}{N}$ bytes unmapped by step A1 map to $\PO$ packet whose mapped block contains the $((j-1)(2L-1)+1)$-st byte $x$, $1\leq j\leq N$. We say that $x$ ``defines'' a mapping of this portion of still unmapped bytes.
Let $Y$ be the overall length mapped by step A2 prior to time $t$ and still there are $Y_0$ unmapped bytes after applying step A1 during time $t$. Let the mapping of the $Y$-th byte assigned by step A2 be the $l$-th byte in the $\OPT$ buffer. Map each $j$-th portion of $\frac{Y_0}{N}$ still unmapped by step A1 byte to $\PO$ packet whose mapped block contains the $(j(2L-1)+l+1)$-st byte, $1\leq j\leq N$. Observe that both these bytes can be remapped to the other $\PO$ packet during the \emph{ $(O,j)$-mapping-shift }.
\item [\bf A3] Values unmapped by steps A1 and A2 are assigned to the HOL $\PO$ packet. We will show that step A3 is never applied and 
is required only for completeness.
\end{itemize}

The mapping is feasible since during arrivals the $\PO$ buffer contains at least one packet and any value that is unmapped by Steps A1 and A2 is assigned by step A3 to the
HOL $\PO$ packet. Lemma~\ref{lem:feas_and_blocksize}(2) remains the same.
The next lemma is very similar to Lemma~\ref{l:srpt_po_faster}. Namely, if an $\OPT$ packet $p$ is mapped by step A1 to a (possibly transmitted) $\PO$ packet $q$ then $r_t(p)\geq r_t(q)$. The fact that the total value assigned to each $\PO$ packet is at most $\frac{(2L-1)(N+1)}{N}$ follows from the fact that for each $\OPT$ packet $p$ that is mapped by step A1 to a $\PO$ packet $q$ at any time $t$, $r_t(p)\geq r_t(q)$, the maximal block size is $2L-1$ bytes. Theorem~\ref{t:srpt-po-upper-bound-new} follows immediately from Lemma~\ref{l:srpt-po-faster-new}.

\begin{lemma}
\label{l:srpt-po-faster-new}
After arrival of the $t$-th packet, if an $\OPT$
packet $p$ is mapped to a (possibly transmitted) $\PO$ packet $q$
then $r_t(p)\geq r_t(q)$. Moreover, all $\OPT$ packets are mapped and
at most $\frac{(2L-1)(N+1)}{N}$ value is mapped to each $\PO$ packet at time $t$, where $N=\lceil\frac{B-2L+1}{2L-1}\rceil$.
\end{lemma}

\begin{corollary}
\label{t:srpt-best}
If $B > 4L^2-2L$ then $\PO$ is at most
$2L$-competitive for SRPT-based priorities.
\end{corollary}

\section{Buffer Management with LP-based Priorities}\label{sec:pq_length}

\subsection{Lower bound}\label{sec:pq_length_preemptive}

We begin with a lower bound for the $\PO$ algorithm with LP-based priorities.
and then proceed to an upper bound of $\PO$ with LP-based priorities.
\begin{theorem}
\label{t:lp-po-lower-bound} $\PO$ is at least $k$-competitive for LP-based priorities on a sufficiently long sequence.
\end{theorem}
\begin{proof}
Here, we will consider a push-out version of $\OPT$  for simplicity of
description. Assume $\frac{B}{L}$  be an integer value. Consider a cycle of $L$ iterations of the first type and later sequence of $n>0$  iterations
of the second type (defined below). Each iteration of the first type contains $k-1$ time slots. At the beginning of the $i$-th iteration of the first type
$\lceil\frac{B}{i}\rceil$ packets of $i$  bytes with $k$ processing cycles arrive and later during the same time slot $\lceil\frac{B}{i}\rceil$ packets of $i$  bytes with
$1$ processing cycles arrive. $\OPT$ drops the first subsequence and accepts the second. On the other hand, $\PO$  accepts the first subsequence and drops the second. So during each iteration of the first type $\OPT$ transmits $i(k-1)$  bytes but $\PO$  transmits zero bytes. At the beginning of the next iteration of the first type both algorithms push out already admitted packets that still remain in their buffers.

After the $L$-th iteration, both buffers are nearly full with packets of size $L$, but with $k$ processing cycles in the case of $\PO$ and one residual pass in the case of $\OPT$. Now a sequence of the second type starts. After the last transmission by $\PO$, $k+1$  packets arrive in the following order: first one $L$-byte packet with $k$  passes and thereafter $k$ packets of length $L$ with a single residual pass. The first packet is accepted by $\PO$  and dropped by $\OPT$. The latter $k-1$  packets
are dropped by $\PO$ and accepted by $\OPT$. Each buffer is completely full again. So during each iteration of the second type $\OPT$
transmits $kL$  bytes but $\PO$  only $L$  bytes. After $n$ iterations of the second type the overall transmission of $\OPT$ is $\frac{L(1+L)(k-1)}{2}+knL+B$ while $\PO$ transmits $Ln+B$ bytes. Thus, the lower bound on competitive ratio of $\PO$ is $\frac{L(1+L)(k-1)+2kLn+2B}{2(Ln+B)}$.
\end{proof}

\subsection{Upper bound}

\begin{theorem}
\label{t:lp-po-upper-bound} $\PO$ is at most
$(k+3)$-competitive for LP-based priorities with sufficiently big buffers.
\end{theorem}
\begin{proof}[Sketch]

The mapping routine is unchanged during the transmission phase as in Section~\ref{sec:pq-residual-passes}.
\ignore {
{\bf Mapping routine $\phi$:}
During the transmission phase we distinguish between the three following cases:
\begin{itemize}[noitemsep,topsep=0pt,parsep=0pt,partopsep=0pt,listparindent=0pt,itemindent=0pt]
\item [\bf T0] If neither $\OPT$ and $\PO$ transmit then the mapping remains unchanged.
\item [\bf T1] If $\PO$ transmits a packet $q$ then we remove its mapped image in $\OPT$'s buffer
from future consideration in the mapping.
A subset of these $\OPT$ packets or bytes that stays in the $\OPT$ buffer at the end of transmission are called of type 1.
\item [\bf T2] If $\OPT$ transmits a packet $p$ but its mapped packet $q$ in $\PO$ is not transmitted then $p$ is termed a packet of type 2.
\end{itemize}
}
During the arrival of a packet $p$ at time $t$ Steps A0, A2 and A3 are the same as in Theorem~\ref{t:srpt-po-upper-bound-new}.
At time $t$, denote by $M^O_t$ a set of non-type 1 packets sojourns in $\OPT$ buffer and not mapped by step A2. In addition $M^O_t$ is ordered in non-increasing order of packet length.
All $M^O_t$ packets are grouped into blocks in the following way.
Let $q$ be an $i$-th packet in $\PO$ buffer at time $t$. An $i$-th block is defined in the following way. Consider a minimal set $B_0$ of packets starting from the lowest position that are represented in $M^O_t$ and not covered by any other block whose overall required work is at least $r_t(q)$. If the overall length of all packets in $B_0$ is at least $\ell(q)$ then $B_0$ forms a block. Otherwise, add to $B_0$ a minimal set of packets $B_1$ starting from the first packet that is represented in $M^O_t$ and not covered by $B_0$ such that the overall length of packets in $B_0\cup B_1$ will be at least $\ell(q)$. In this case a set of packets that is covered by $B_0\cup B_1$ defines a block.
Denote by $\ell(X)$ the overall length of packets and by $r_t(X)$ the overall required work in a set of packets $X$ at time $t$. A block $b$ that is mapped to a $\PO$ packet $q$ is called {\em fully mapped} to a packet $q$ at time $t$ if $\ell(b)\geq \ell(q)$ and $r_t(b)\geq r_t(q)$. Observe that it is possible that $\ell(B_0)$ will be less than its $\PO$ counterpart. In this case $\OPT$ may later accept packets that will not be accepted by $\PO$.

We define a new step A1 where the blocks are recomputed after a \emph{$(P,i)$-mapping-shift}.

\begin{itemize}[noitemsep,topsep=0pt,parsep=0pt,partopsep=0pt,listparindent=0pt,itemindent=0pt]
\item [\bf A1] If after $p$ is accepted some $\PO$ packets were dropped then clear the mappings by step A1 between these $\PO$ packets and its $\OPT$ counterparts. If $p$ remains in $\PO$ buffer and $p$ is the $i$-th packet in $\PO$ buffer, perform a \emph{ $(P,i)$-mapping-shift}: clear all mappings by step A1 between packets of the old $l$-th block in $\OPT$ buffer and $l$-th packet in $\PO$ buffer, $l\geq j$, recompute blocks from $j$-th and map packets of $l$-th block to the $l$-th $\PO$ packet if both exist, $l\geq j$.
If $p$ is accepted by $\OPT$ to the $j$-th block, perform  an \emph{$(O,j)$-mapping-shift}: clear all mappings by step A1 between packets in $\OPT$ buffer of the old $l$-th block and $l$-th packet in $\PO$ buffer (if both exist), $l\geq j$, recompute blocks from $j$-th and map packets of $l$-th block to $l$-th $\PO$ packet if both exist, $l\geq j$.
\end{itemize}

Clearly, the mapping is feasible since if
$\OPT$ accepts some packet that is not accepted by $\PO$, $\PO$
buffer contains at least one packet. Since affected
blocks are recomputed after each \emph{ $(P,i)$-mapping-shift } and
\emph{ $(O,j)$- mapping-shift } and by definition of a block $b$
that is mapped to a $\PO$ packet $q$ at time $t$, $\ell(b)\geq
\ell(q)$ and $r_t(b)\geq r_t(q)$. Thus, we will consider
sufficiently big buffers where $\frac{2L-1}{B}$ tends to zero and
because of the above properties of the block step A2 will introduce
at most additional $\epsilon$ value for each $\PO$ packet. Hence,
for each packet $q$ transmitted by $\PO$, $\OPT$ transmits at most
$(k+1)l(q)+\epsilon$ by Steps A1 and A2.
Denote by $T$ the total number of bytes transmitted by $\PO$ and
by $P$ the total number of bytes transmitted by $\OPT$ during
processing of pushed-out $\PO$ packets. Thus, the competitive ratio
is at most $\frac{(k+1+e)T+P}{T}$. Now let us estimate $P$ and
substitute it into the previous expression. For each pushed-out by $\PO$
packet $p$ denote by $T(p)$ a number of time slots when $p$ was HOL
before it was pushed-out.  Clearly, that the process of push-outs
of packets that have positive $T(p)$ will be stopped once all
packets will have a maximal packet length $L$ or it can continue
during each time slot when there is at least one packet in $\PO$
buffer of length smaller than $L$. Moreover, if push-out happens
the buffer occupancy is at least $B-2L+1$. Denote by $\cal P$ a set
of $\PO$ packets pushed-out during this interval of time. So for
each $B-2L+1$ bytes transmitted by $\PO$, $P$ is bounded by
$\sum_{p\in \cal P}{T(p)l(p)}\leq \sum_{p\in \cal P}kl(p)]\leq
kL(L+1)/2$. Thus, $\PO$ is at most
$\frac{2(k+1+\epsilon)B+kL(L+1)}{2(B-2L+1)}$-competitive. For the
buffers that are significantly bigger than $kL(L+1)$, $\PO$ is at
most $k+3$-competitive.
\end{proof}

\section{Buffer Management with MEP-based Priorities}
\label{sec:pq-residual-passes-byte}
In this section we study the performance of a BM implementing PQ, where priorities are set in accordance with the non increasing order of processing cycles divided by packet length. This priority is dubbed the {\em Most Effective Packet} first priority (MEP), or the {\em effective-ratio} priority. Recall that our objective here is to maximize the number of bytes transmitted in total. Again, we concentrate on the push-out case since non-push-out results are similar to Section~\ref{sec:non-preemptive}.

The following theorem provides a lower bound on the performance of the push-out MEP policy.

\begin{theorem}\label{t:mep-po-lower-bound}
$\PO$ with MEP-based priorities is at least $\left(1+\frac{L\ln k - k - L}{B}\right)$-competitive.
\end{theorem}
\begin{proof}
In what follows, we denote $h(n)=\sum_{j=0}^{n}\left\lfloor l/j\right\rfloor$ and assume that $h(k)$ is an integer;
the exact value of $l$ is not important, but the best bound results from $l=L$. Note that $lH(n)-n\le h(n)\le lH(n)$.

In the first burst, we send in an $l$-byte packet with $k$ processing cycles followed by a $\left\lfloor l/k\right\rfloor$-byte packet with
$1$ processing cycle; PO begins processing the first packet while OPT rejects it and processes the second immediately.
On the second time slot, there arrives a $\left\lfloor l/(k-1)\right\rfloor$-byte packet with one processing cycle; PO queues it while
OPT processes it immediately. And so on: on the $i^{\text{th}}$ time slot, a $\left\lfloor l/(k-i)\right\rfloor$-byte packet with one
processing cycle arrives, $i=1,\ldots,k-1$. After $k-1$ steps, OPT has processed $k-1$ packets, has sent out
$h(k)-l$ bytes, and has an empty queue; at the same time, PO has not sent anything out, and its queue contains packets with
one processing cycles and lengths from the set $\left\{\left\lfloor l/(k-i)\right\rfloor, i=0,\ldots,k-1\right\}$
(if $B<h(k)$, PO may push out some of the earlier packets to make room for the later ones).

In the next burst, $B/l$ packets of size $l$ and $1$ processing cycles arrive and flush out everything from PO's queue; we then let
both algorithms finish their packets, sending out $B$ bytes in the process. As a result, PO has sent out $B$ bytes in this process
while OPT has sent $h(k)-l+B\ge l\ln k - k - l + B$ bytes, getting the bound.
\end{proof}

Note that Theorem~\ref{t:mep-po-lower-bound} provides a nontrivial lower bound only for $L > \frac{k}{\ln k-1}$.
A nontrivial upper bound for $\PO$ with MEP priorities remains an interesting open problem.

\section{Simulation study}\label{sec:simulations}

\subsection{General remarks}

In order to obtain a better understanding of the differences between our proposed solutions, we conducted a simulation study where we evaluate the performance of each policy in terms of throughput and address the effect of variable processing requirements on the average delay in the system.

Publicly available traffic traces (such as CAIDA~\cite{CAIDA}) do not contain, to the best of our knowledge, information on the processing requirements of packets. Furthermore, these requirements are difficult to extract since they depend on the specific hardware and NP configuration of the network elements. Another handicap of such traces is that they provide no information about time-scale, and specifically, how long should a time-slot last. This information is essential in our model in order to determine both the number of processing cycles per time-slot, as well as traffic burstiness.
We therefore perform our simulations on synthetic traces.

Our simulation results are based on traffic composed of the interleaving of 100 independent sources, with each source generated by an on-off bursty process modeled by a Markov-modulated Poisson process (MMPP).
During every time slot, each source has probability $0.05$ to be switched on, and once switched on, probability $0.2$ to be switched back off.
When a source is on, it emits packets with intensity $\lambda_{\mathrm{on}}$, which represents one of the parameters governing traffic generation. Each generated packet is assigned two parameters:
\begin{inparaenum}[(i)]
\item required processing chosen uniformly at random from $\set{1,\ldots,k}$ ($k$ being the maximum amount of processing required by any packet), and
\item packet length, chosen uniformly at random from $\set{1,\ldots,L}$ ($L$ being the maximum length of a packet in the system).
\end{inparaenum}
Each of our results follows from simulating the system for $5{,}000{,}000$ time slots; we allowed different parameters to vary in each set of simulations in order to better understand the effect each parameter has on system performance and further validate our analytic results and algorithmic insights. 

We simulated the throughput performance of our three proposed policies, all based on the greedy algorithm depicted in Algorithm~\ref{alg:pq-1}:
\begin{inparaenum}[(i)]
\item SRPT,
\item LP, and
\item MEP.
\end{inparaenum}
In order to obtain a better qualitative differentiation between the policies, we compared their throughput performance with that of a ``virtual'' policy, which serves as an approximate upper bound on the optimal throughput possible. This virtual policy essentially transforms each arriving packet requiring $k' \leq k$ processing cycles, and having length $\ell' \leq K$, into $k'$ distinct packets, each requiring one processing cycle, and having length $\ell'/k'$, using the LP/MEP as the scheduling and admission criteria (they are equivalent for such virtual inputs). Clearly the performance of this virtual policy serves as an approximate upper bound on the performance of the optimal policy, since this policy
profits from any partial processing of a packet.
We use this approximation since finding the actual optimal algorithm would be computationally prohibitive:
even identifying the best set of packets to store in a single time step is equivalent to the knapsack problem which is NP-hard.
In Figures~\ref{fig:simulations_lambda}(a), \ref{fig:simulations_k}(a), and \ref{fig:simulations_b}(a), which demonstrate the throughput performance of the system, the y-axis represents the ratio between the throughput obtained by a policy and the throughput obtained by the virtual policy (which serves as an approximate upper bound on the optimal policy).

Another set of results produced by our simulation study deals with the average queuing delay of packets for each of the policies considered. As mentioned in the introduction, queueing delay has long been known to be directly related to the buffer size available for the queue. Our work tries to shed light on the role of variable processing requirements as a major factor affecting queueing delay in such heterogeneous environments and relate this latency performance to that of the attainable throughput.
In Figures~\ref{fig:simulations_lambda}(b), \ref{fig:simulations_k}(b), and \ref{fig:simulations_b}(b), which demonstrate the average latency in the system, the y-axis represents the average latency (in time slots) over all packets delivered.

We present here only a small sample of our results, aiming to explore the effect of various parameters examined in our study. Specifically, we consider the effect of offered-load, average number of processing required by a packet, buffer size, and average packet length.
For each of the first three parameters here mentioned, we present a cross section of the effect of average packet length by providing three plots corresponding to maximum allowed packet length values $L=10,15,25$.

\subsection{Varying traffic intensity}
\label{sec:sim:load}

\begin{figure*}
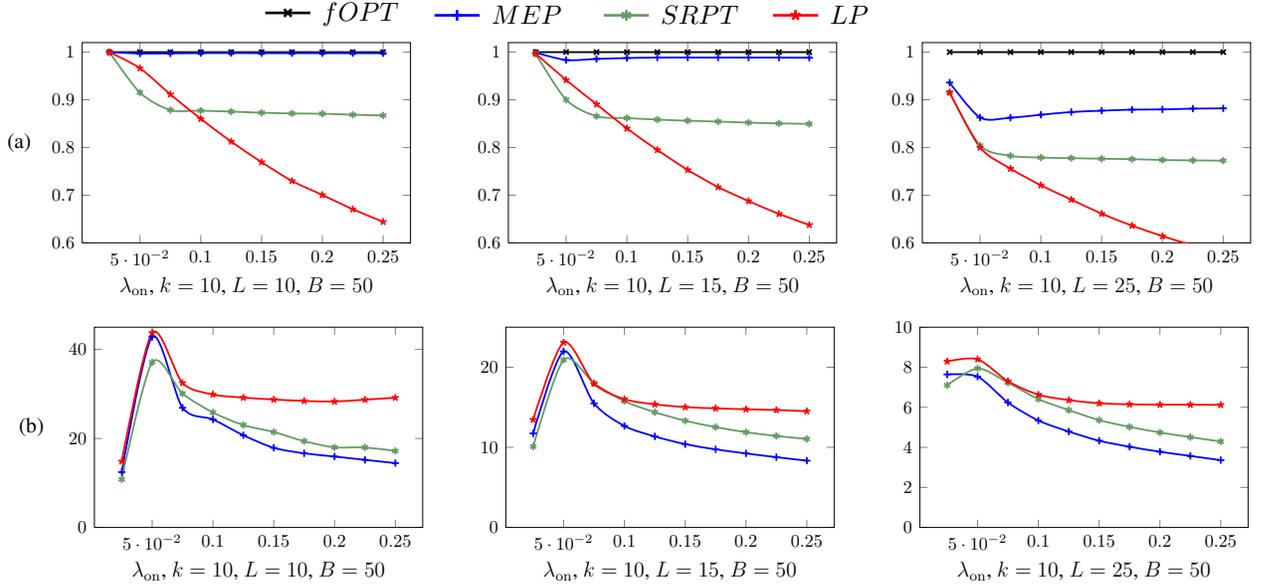

\begin{center}
\simullegend

\begin{tabular}{ccc}
\variablelambda{10}{}{1.02}{0.6}{\rotatebox{270}{(a)}} & \variablelambda{15}{}{1.02}{0.6}{ } & \variablelambda{25}{}{1.02}{0.6}{} \\\vspace{\alittle}
\variablelambda{10}{.lat}{45}{0}{\rotatebox{270}{(b)}} & \variablelambda{15}{.lat}{25}{0}{ } & \variablelambda{25}{.lat}{10}{0}{} \\\vspace{\alittle}
\end{tabular}
\end{center}

\caption{Throughput performance (a) and latency (b) as a function of incoming stream intensity $\lambda_{\mathrm{on}}$ for three different values of maximal packet length $L$.
}\label{fig:simulations_lambda}
\end{figure*}

Figure~\ref{fig:simulations_lambda} shows the system performance as a function of increased average load, where we increase the rate of each independent source by increasing the parameter $\lambda_{\mathrm{on}}$ which governs packet intensity during a burst period.
Figure~\ref{fig:simulations_lambda}(a) shows that, in general, MEP is the best policy (this will always be the case throughout our
simulations). However, when examining the other two policies, although as traffic intensity increases SRPT significantly outperforms LP, under low load conditions and small values of $L$, LP outperforms SRPT; This indicates that  under moderate load conditions, and when packet length variability is small, it is best to prioritize longer packets rather than by their processing requirements.
When either as traffic intensity increases (or packet length variability grows), the system will be prone to increased congestion, whose alleviation is possibly by preferring packets which take a shorter time to process.
As for the latency, Figure~\ref{fig:simulations_lambda}(b) shows that average latency increases up to a certain point, and then steadily decreases. This increase occurs in moderate load conditions, where all algorithms are ``forced'' to accept non-favorable packets. However, as traffic intensity increases, all algorithms have a better selection of packet to accept, and each will focus on its more preferable packets, thus resulting in decreasing packet latency.

\subsection{Increasingly heterogeneous processing requirements}

\begin{figure*}
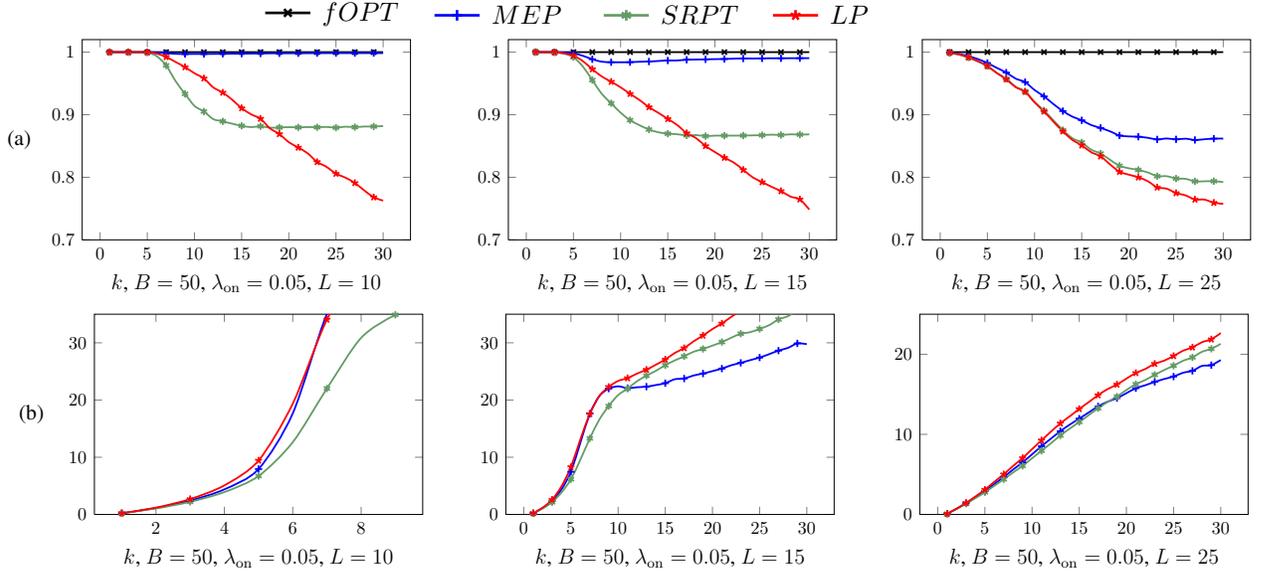

\begin{center}
\simullegend

\begin{tabular}{ccc}
\variablek{10}{}{1.02}{0.7}{\rotatebox{270}{(a)}} & \variablek{15}{}{1.02}{0.7}{ } & \variablek{25}{}{1.02}{0.7}{ } \\\vspace{\alittle}
\variablek{10}{.lat}{35}{0}{\rotatebox{270}{(b)}} & \variablek{15}{.lat}{35}{0}{ } & \variablek{25}{.lat}{25}{0}{ } \\\vspace{\alittle}
\end{tabular}
\end{center}

\caption{Throughput performance (a) and latency (b) as a function of maximal required work $k$ for three different values of maximal packet length $L$.
}\label{fig:simulations_k}
\end{figure*}

Figure~\ref{fig:simulations_k} shows the system performance as we allow packet processing requirement to increase, both in value and in variability.
As demonstrated in Figure~\ref{fig:simulations_k}(a), while the MEP policy outperforms both other policies, for relatively small $L$ we observe a transition from LP to SRPT as the second best policy. One can see that while the average number of processing cycles is relatively low, the LP policy outperforms the SRPT policy, while as the average number of required processing increases beyond some threshold, SRPT becomes superior to LP. This behavior is similar to that observed in the study of the effect of traffic-intensity on the performance in Section~\ref{sec:sim:load}. This coincides with the intuition that the actual notion of load in the system is actually the product of the average required processing and packet arrival rate.
The simulation results presenting the effect of increasing load and increasing required processing on the system's throughput are in accordance with the results obtained in our analytic study, which show that the ratio between the parameters $k$ and $L$ indeed corresponds to which of the policies is expected to be superior.
The latency graphs here (Figure~\ref{fig:simulations_k}(b)) are mostly strictly increasing, with latency becoming pronounced as the overall arrival load (in the sense just described) topping the system's processing service rate (this occurs at around $k=5$). When examining the differences in latency as average packet length increases, once can see that the average latency (for the same values of $k$) is inversely proportional to the average packet length. This is due to the fact that every successful transmission when packets are larger leaves reduces the delay of packets remaining in the queue.

\subsection{The effect of buffer size}

\begin{figure*}
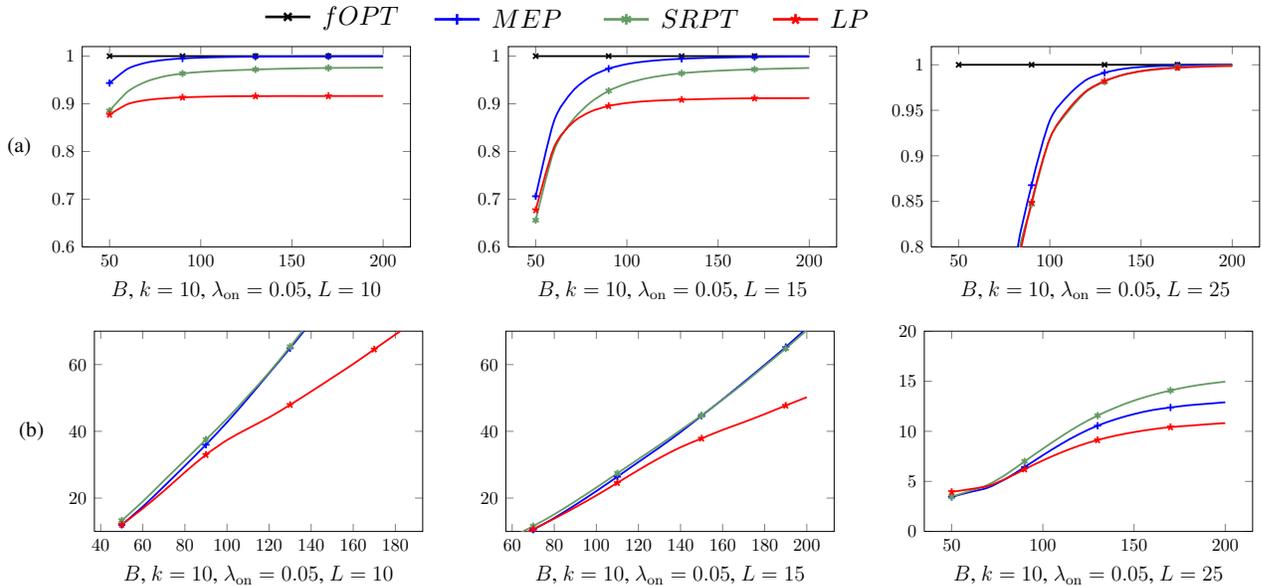

\begin{center}
\simullegend

\begin{tabular}{ccc}
\variableb{10}{}{1.02}{0.6}{\rotatebox{270}{(a)}} & \variableb{15}{}{1.02}{0.6}{ } & \variableb{25}{}{1.02}{0.8}{ } \\\vspace{\alittle}
\variableb{10}{.lat}{70}{10}{\rotatebox{270}{(b)}} & \variableb{15}{.lat}{70}{10}{ } & \variableb{25}{.lat}{20}{0}{ } \\\vspace{\alittle}
\end{tabular}
\end{center}\vspace{-.5cm}
\caption{Throughput performance (a) and latency (b) as a function of buffer size $B$ for three different values of maximal packet length $L$.
}\label{fig:simulations_b}
\end{figure*}

Figure~\ref{fig:simulations_b} shows the effect buffer size has on system performance.
In Figure~\ref{fig:simulations_b}(a) One can see that the buffer size has relatively little effect on the differences between the policies in terms of throughput: all three policies relatively quickly achieve their corresponding maximal performance
and stay there as buffer size grows further; This is due to the fact that an beyond a certain point, packet arrival rate is smoothed by the availability of buffer space.
In terms of latency, Figure~\ref{fig:simulations_b}(b) shows a steady increase in latency, which should be ascribed to queueing delay. However, the LP policy exhibits the best performance in these scenarios since it favors the transmission of longer packets first, which alleviate the latency sensed by the remaining packets in the buffer.

In general, our results clearly show that the MEP policy is better than both other policies with respect to throughput.
Note that in terms of latency the best policy (MEP) does not necessarily outperform other policies: since it processes
more packets, some of them must wait for their turn longer.

Our simulation results and the insights they provide serve as a rule-of-thumb in choosing the best policy for a specific network scenario, depending on the expected traffic characteristics.

\section{Conclusion}\label{sec:conclusion}
Increasingly heterogeneous packet processing requirements in modern networks pose novel design
challenges to NP architects. In this work we study the impact of two important
characteristics, maximal required processing $k$ and maximal packet size $L$, and show the significance of the relationship between $k$ and $L$. We introduce three different priority regimes
for processing: SRPT, LP, and MEP, and study their performance in queues with bounded buffers. We present results for both non-push-out, as well as push-out buffer management algorithms, which give guarantees on the worst-case performance of such algorithms, without resorting to any assumptions on the process generating the traffic. Due to this approach, are results can be globally applicable, in various networking environments which may deal with highly heterogenous traffic patterns.

Our results show that implementing a push-out mechanism, although potentially costly in terms of vendor implementation, has a significant impact on the system's performance, primarily in terms of throughput.
Remaining open questions include closing the gaps between the upper and lower bounds (shown in Table~\ref{tbl}), and predominantly determining the performance of the $\PO$ algorithm with MEP priorities.



\bibliographystyle{plain}
\bibliography{np}

\end{document}